\DeclareRobustCommand{\stirling}{\genfrac\{\}{0pt}{}}
\begin{document}
%
\title{
Success Probability of Grant-Free Random Access with Massive MIMO}

\author{Jie~Ding,~
        Daiming~Qu,~
        Hao Jiang,~
        and~Tao~Jiang,~\IEEEmembership{Senior Member,~IEEE}
\thanks{Jie Ding, Daiming Qu, Hao Jiang, and Tao Jiang are with
School of Electronic Information
and Communications, Huazhong University of Science and Technology,
Wuhan, 430074, China.}
\thanks{This work was supported in part by the National Natural Science Foundation of China funded project under grant number 61701186 and the China Postdoctoral Science Foundation funded project under grant number 2017M612458.}}

\maketitle

\begin{abstract}
Massive MIMO opens up new avenues for enabling highly efficient random access (RA) by
offering abundance of spatial degrees of freedom.
In this paper, we investigate the grant-free RA with massive MIMO and derive the analytic expressions of success probability of the grant-free RA for conjugate beamforming and zero-forcing beamforming techniques. With the derived analytic expressions, we further shed light on the impact of system parameters on the success probability. Simulation results verify the accuracy of the analyses. It is conf\/irmed that the grant-free RA with massive MIMO is an attractive RA technique with low signaling overhead that could simultaneously accommodate a number of RA users, which is multiple times the number of RA channels, with close-to-one success probability. In addition, when the number of antennas in massive MIMO is suf\/f\/iciently large, we show that the number of orthogonal preambles would dominate the success probability.
\end{abstract}

\begin{IEEEkeywords}
Success probability, Grant-free, Random access, Massive MIMO, M2M.
\end{IEEEkeywords}

%
\IEEEpeerreviewmaketitle

\section{Introduction}
 Future mobile communication systems not only envision enhancing the traditional mobile broadband use case, but also aim to meet the requirements of new emerging use cases, such as Internet of Things (IoT) \cite{1}\cite{2}. As an enabler of the IoT, machine-to-machine (M$2$M) communications have attracted considerable attention from academia and industries. In M$2$M, the number of random access (RA) user equipments (UEs) is enormous and their data packets are usually short and sporadic in nature. As a result, fulfilling the demand of massive access with low signaling overhead and access delay is a key technological issue in future wireless communications \cite{3}.

The legacy request-grant RA procedure in long term evolution (LTE) was only designed to provide reliable access to a small number of UEs with long packets to transmit \cite{4}. To support M$2$M communications, several modifications and improvements have been proposed \cite{7, 8, 9}.
Additionally, a new narrowband IoT (NB-IoT) technology, based on LTE, has been standardized by 3GPP to this end. In \cite{34, 35, 36}, design and optimization of RA in NB-IoT have been presented, shedding light on the potential of NB-IoT toward supporting M$2$M communications. Nevertheless, since the RA of NB-IoT is a request-grant protocol based on slotted-ALOHA and very
limited wireless resources are provided for NB-IoT RA UEs, it is unable to support massive access as required by the M$2$M, where low signaling overhead and access delay are essential.

Recently, massive multiple-input multiple-output (MIMO), has been identified as a promising technology to mitigate the wireless resource scarcity and handle the rapid growth of data traffic \cite{14, 15, 30}, which opens up new avenues for massive access by offering abundance of spatial degrees of freedom \cite{31}. Several works were devoted to improving the legacy request-grant RA procedure \cite{16,17,18} by taking advantage of high spatial resolution and channel hardening of massive MIMO \cite{15}.
These works validated the effectiveness of massive MIMO in resolving access collision and enhancing RA capacity. However, considering small-sized packets generated by IoT applications, the request-grant RA procedure brings in relatively long waiting time before data transmissions for RA UEs. Moreover,
since the channel resources reserved for request and grant signaling are not utilized as efficiently as the data channel that takes full advantage of massive MIMO, the request-grant RA is not an efficient approach in the case of massive MIMO.
%

To effectively manage M$2$M communications at low signaling overhead and access delay, grant-free RA (also known as one-stage RA) with massive MIMO is a compelling alternative. In the grant-free RA, request-grant procedure in the legacy RA is omitted and RA UEs contend (i.e., perform RA)
with their uplink payloads directly by transmitting preamble along with data \cite{19}. As a result, signaling overhead and access delay are minimized, and the radio resources reserved for the request-grant procedure could be unleashed for accommodating more RA UEs. With all the benefits manifested in \cite{16,17,18}, features of massive MIMO could be exploited to effectively accommodating multiple access in RA over the same channel.
Therefore, the grant-free RA with massive MIMO exhibits potential advantages towards addressing RA issues for future wireless communications.
However, to the best of the authors knowledge, this paper is the first one to investigate the performance of grant-free RA with massive MIMO. In \cite{20}, a joint pilot assignment and data transmission protocol was proposed to support massive intermittent transmissions, which relies on pilot-hopping patterns across multiple transmission slots. Since this protocol assumed that each RA UE is associated to a unique and predefined pilot-hopping pattern and the BS
knows in advance the pilot-hopping patterns of all RA UEs, it is not a genuine grant-free RA protocol with which RA UEs compete for the channel access by transmitting
preambles randomly chosen from a preamble pool.




In this regard, the grant-free RA with massive MIMO is investigated in this paper to provide insights into the design of RA protocols for future wireless communications. Specifically, we consider a grant-free RA scenario that a large number of RA UEs contend for access to limited channel resources by directly transmitting preamble along with data.
Success probability, as the performance metric, is used in this paper to evaluate the effectiveness of grant-free RA with massive MIMO.

The novelty and contribution of this paper are summarized as follows.
\begin{itemize}
  \item We propose the idea of grant-free RA with massive MIMO and derive approximate expressions of the success probability of the grant-free RA for the cases of conjugate beamforming (CB) and zero-forcing beamforming (ZFB), respectively.

  \item Taking into consideration that the number of antennas $M$ in massive MIMO is usually suf\/f\/iciently large, it is found that the number of orthogonal preambles $P$ would dominate the success probability. 

  \item Simulation results show that our analyses are accurate and the grant-free RA with massive MIMO is able to support $N_a$ simultaneous grant-free access over $C$ RA channels with close-to-one success probability, where $N_a$ is multiple times $C$. A great MIMO gain in terms of $\eta$ could be achieved for the grant-free RA compared to its single-antenna counterpart, where $\eta=N_a/C$ reflects the channel reuse efficiency.

  \item It is demonstrated that the grant-free RA with massive MIMO achieves a close performance to the one with even user distribution (EUD) over channels. This is an important merit of the grant-free RA with massive MIMO, considering the fact that the EUD is desirable but unattainable in the grant-free RA.
\end{itemize}

The remainder of this paper is organized as follows. In Section II,
the grant-free RA with massive MIMO is briefly described. In Section III, analyses and derivations on the success probability of grant-free RA with massive MIMO are detailed.
Simulation results are
presented in Section IV and the work is concluded in Section V.

\emph{Notations: }Boldface lower and upper case symbols represent
vectors and matrices, respectively. $\mathbf{I}_n$ is the $n \times n$ identity matrix. The trace, conjugate, transpose, and complex conjugate transpose
operators are denoted by $\mathrm{tr}(\cdot)$, $(\cdot)^{*}$, $(\cdot)^{\mathrm{T}}$ and $(\cdot)^{\mathrm{H}}$. $\mathbb{E}[\cdot]$ denotes the expectation operator. $\|\cdot\|$ denotes the Euclidean norm and $[\mathbf{G}]_{ij}$ denotes the entry of matrix $\mathbf{G}$ on the $i$th row and $j$th column.
$\mathbf{x}\sim \mathcal{CN}(0,\mathbf{\Sigma})$ indicates that x is a circularly symmetric
complex Gaussian (CSCG) random vector with zero-mean and
covariance matrix $\mathbf{\Sigma}$. $\mathcal{B}(r,n,p)= {\binom{n}{r}}p^r(1-p)^{n-r}$ is the probability
mass function of a binomial distribution with parameters $r$, $n$, and $p$, where ${\binom{n}{r}}=\frac{n!}{r!(n-r)!}$ is the binomial coefficient.

\section{Grant-Free Random Access with Massive MIMO}
In this section, we firstly outline the system model with massive MIMO considered in this paper. Then, the procedure of grant-free RA with massive MIMO is introduced.

In Fig. \ref{fig1}, a single-cell massive MIMO system is depicted, where the base station (BS) is configured with $M$ active antenna elements and single-antenna RA UEs are uniformly distributed throughout the cell. With spatial
reuse capability of massive MIMO, the BS is able to support multiple RA UEs simultaneously over a same channel. In this paper, we assume that $N_a$ RA UEs are active to perform RA in a RA slot. The RA slot herein is the wireless radio resources dedicated to the grant-free RA, which consists of $C$ channels over frequency, as shown in Fig. \ref{fig2}.
\begin{figure}[!h]
\centering
\includegraphics[width=3.3in]{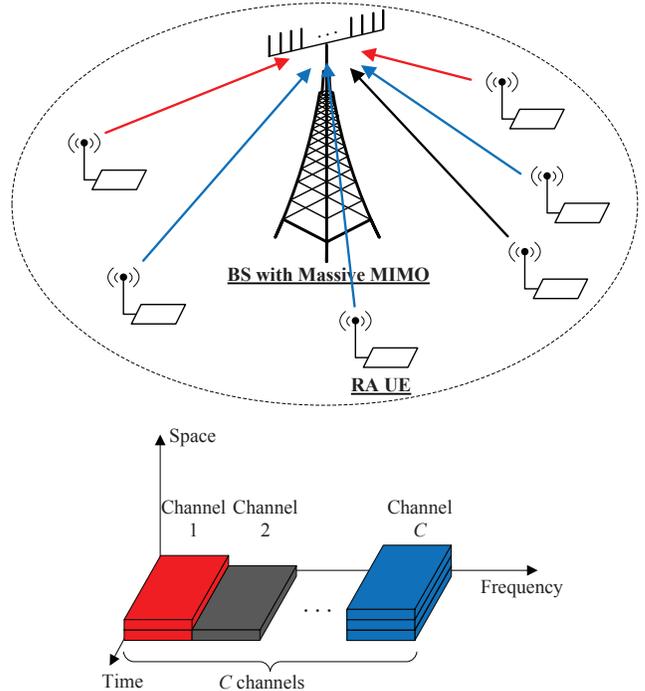}
\caption{System model with massive MIMO.} \label{fig1}
\end{figure}

Since $M$ is sufficiently large in massive MIMO, favorable propagation (FP) can be approximately achieved, which means that RA UEs' channel vectors are approximately orthogonal. The feature of FP enables spatial multiplexing of multiple RA UEs over a same channel \cite{14}\cite{15}. Specifically, simple linear processing, such as CB and ZFB, could be applied at the BS, to discriminate the signal transmitted by each RA UE from the signals of other RA UEs.

\begin{figure}[!h]
\centering
\includegraphics[width=2.6in]{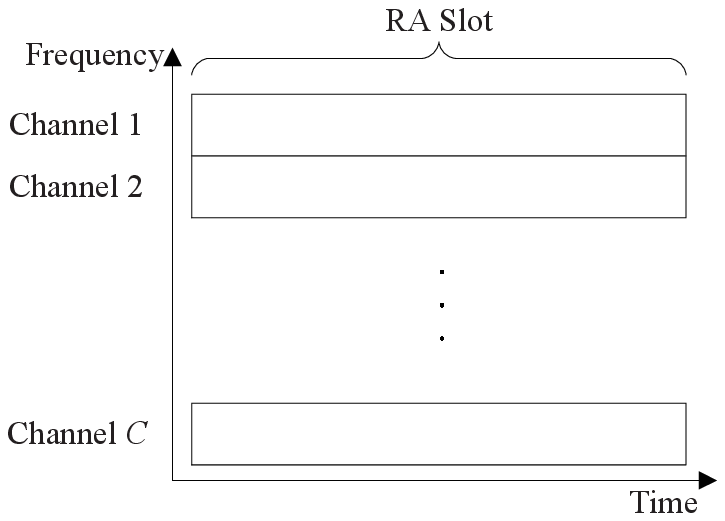}
\caption{Time-frequency resources in a RA slot.} \label{fig2}
\end{figure}

In Fig. \ref{fig3}, the procedure of grant-free RA with massive MIMO is briefly described. Specifically, once a RA UE decides to transmit at the RA slot, it randomly selects one of the $C$ RA channels. Over the selected channel, the RA UE transmits a RA preamble followed by its data. The RA preamble is randomly chosen out of a RA preamble pool, which is used by the BS for preamble detection and channel estimation. We assume that there are $P$ orthogonal RA preambles available in the pool. If the chosen preamble by a RA UE is different from the ones by other RA UEs over the same channel, the RA UE could be detected and its channel response could be estimated with adequate accuracy at the BS. Otherwise, if multiple RA UEs choose the same preamble over the same channel, preamble collision occurs and we assume that all these RA UEs would not be detected and their channel estimations would be failed at the BS.

Information about the available RA preambles is periodically broadcasted by the BS. For RA UEs with only small-sized packet transmissions, they are able to achieve reduced signaling overhead and transmission delay as well as effective power saving with the grant-free RA.
At the BS, after preamble detection and channel estimation, receive beamforming is then used for data recovering.

\begin{figure}[!h]
\centering
\includegraphics[width=2.6in]{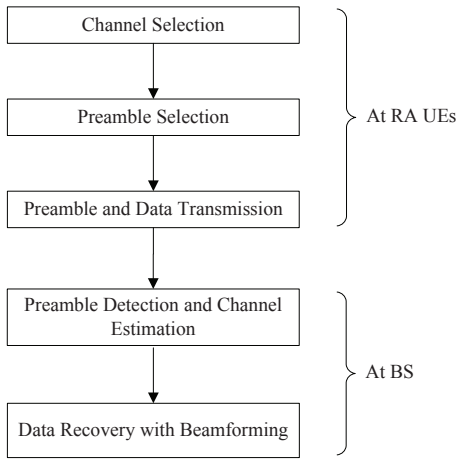}
\caption{Procedure of grant-free RA with massive MIMO.} \label{fig3}
\end{figure}

 By utilizing spatial multiplexing offered by the excess spatial degrees of freedom, massive MIMO is able to serve multiple RA UEs over a same channel, which makes it possible for the grant-free RA with massive MIMO to meet the need of massive access.
In Section III.A, key factors influencing the performance of multiple simultaneous access in the grant-free RA with massive MIMO are discussed.

\section{Success Probability of Grant-Free RA with Massive MIMO }
To evaluate the performance of the grant-free RA with massive MIMO, we use success probability as the performance metric. In this paper, the success probability is defined as the probability of no preamble collision and $\gamma \geq \gamma_{\mathrm{Th}}$ for an arbitrary RA UE, where $\gamma$ is the received signal to interference and noise ratio (SINR) after beamforming at the BS and $\gamma_{\mathrm{Th}}$ is a given SINR threshold. 

In this section, we firstly analyze key factors influencing the performance of grant-free RA with massive MIMO and then derive the success probabilities of the grant-free RA in massive MIMO for CB and ZFB techniques. Lastly, a proposition is provided to shed light on how $M$ and $P$ affect the success probability of the grant-free RA with massive MIMO.

\subsection{Factors Influencing Success Probability}

In order to successfully decode the data of a RA UE in the grant-free RA, it is essential for the BS to have accurate channel response of the RA UE. The BS acquires the channel response from the preamble transmitted by the RA UE. In the case of multiple simultaneous access over a same channel, if other RA UEs select the same preamble with the RA UE, the preamble collision occurs. The BS is thus unable to acquire the correct channel response for the RA UE and the data transmission would be failed. Moreover, the incorrect channel responses lead to an incorrect beamforming pattern (especially for ZFB), which would bring in multiuser interference to other RA UEs and degrade the beamforming performance as a result.
On the other hand, even that the BS acquires good channel
estimations for all the RA UEs, data transmission of the RA UEs would also not be successful if the multiuser interference is sufficiently strong such that $\gamma < \gamma_{\mathrm{Th}}$ .


In summary, the key factors influencing the success probability can be outlined as follows
\begin{enumerate}
   \item Preamble collision. 
      Only RA UEs without experiencing preamble collisions have the chance to get their data recovered by the BS.
  \item Noise and multiuser interference. Multiuser interference in beamforming would result in loss of $\gamma$. Data transmission of a RA UE is considered successful only when its $\gamma$ is greater than $\gamma_{\mathrm{Th}}$. 

\end{enumerate}

In the sequel, the mathematical expressions of success probability are derived by taking the above influencing factors into account. To make the derivations trackable, a block independent Rayleigh fading propagation model is considered, where the propagation channels are assumed constant within the RA slot.
The channel response vector between an arbitrary RA UE and the BS is modelled by $\mathbf{g}=\sqrt{\ell}\mathbf{h} \in \mathbb{C}^{M}$, where $\ell$ denotes the large scale fading coefficient between RA UE and BS, and $\mathbf{h} \sim \mathcal{CN}(0,\mathbf{I}_{M})$ stands for the small scale fading vector between RA UE and BS. In the derivations, we ignore the impact of noise on the channel estimation. Moreover, perfect power control is assumed so that all RA UEs have the same expected receive power at the BS.

\subsection{Success Probability of Grant-Free RA with Conjugate Beamforming}
In this subsection, we derive the success probability of the grant-free RA with CB.
Without loss of generality, we take the $1$st RA UE as an example to specify the theoretical derivations.
Then, the success probability is expressed as
\begin{align}\label{eq2}
{P}_\mathrm{CB}=\sum_{K=0}^{N_a-1}\bar{P}(K)\tilde{P}_{\mathrm{CB}}(K){P}(\gamma^1_{\mathrm{CB}} \geq \gamma_{\mathrm{Th}}|K),
\end{align}
where $\gamma^1_{\mathrm{CB}}$ is the SINR corresponding to the $1$st RA UE after CB at the BS and the subscript $\mathrm{CB}$ indicates that the CB is utilized. $N_a$ is the number of RA UEs for grant-free RA in a RA slot.

$\bar{P}(K)$ represents the probability that $K$ other RA UEs select the same channel with the $1$st RA UE, which is given by
\begin{align}\label{eq3}
\bar{P}(K)=\mathcal{B}(K,N_a-1,\frac{1}{C}).
\end{align}

$\tilde{P}_{\mathrm{CB}}(K)$ represents the probability that no preamble collisions occur between the $K$ RA UEs and the $1$st RA UE, which is given by
\begin{align}\label{eq4}
\tilde{P}_{\mathrm{CB}}(K)=(1-\frac{1}{P})^K.
\end{align}

${P}(\gamma^1_{\mathrm{CB}} \geq \gamma_{\mathrm{Th}}|K) $ is the probability of $\gamma^1_{\mathrm{CB}} \geq \gamma_{\mathrm{Th}}$, given that 1) the channel response of the $1$st RA UE is available to the BS; 2) $K$ other RA UEs share the same channel resource with the $1$st RA UE for data transmissions.

Under the assumption of perfect power control at each RA UE, let $P_{\mathrm{R}}$ denote the expected receive power from each RA UE at each BS antenna. The received uplink signal vector $\mathbf{r} \in \mathbb{C}^{M}$ at the BS is thus written as
\begin{align}\label{eq5}
\mathbf{r}
=\sum_{i=1}^{K+1}\sqrt{P_{\mathrm{R}}}\mathbf{{h}}_{i}x_{i}+\mathbf{{n}},
\end{align}
where the index set of $K$ RA UEs is assumed to be $\mathcal{I}=\{2,3,\ldots, K+1\}$. $\mathbf{{h}}_{i}$ is the small scale fading vector between the $i$th RA UE and the BS. $x_{i} $ is data symbol transmitted by the $i$th RA UE and $\mathbb{E}[|x_{i}|^2]=1$. $\mathbf{{n}}\sim \mathcal{CN}(0,\sigma_\mathrm{{n}}^2\mathbf{I}_{M})$ is a vector of the additive white Gaussian noise (AWGN). We denote the uplink SNR at the BS corresponding to each RA UE by $\rho_{\mathrm{R}}\triangleq P_{\mathrm{R}}/\sigma_\mathrm{{n}}^2$.

With $\mathbf{r}$, the BS recovers the $1$st RA UE's data symbols by CB based on
its channel response. Then, the recovered signal corresponding to the $1$st RA UE after CB is given as
\begin{align}\label{eq6}
y_{1}&=\mathbf{b}^{\mathrm{T}}_1\mathbf{r} \nonumber\\
&=\sqrt{P_{\mathrm{R}}}\mathbf{h}^{\mathrm{H}}_{1}\mathbf{h}_{1}x_{1}+\sum_{i=2}^{K+1}\sqrt{P_{\mathrm{R}}}\mathbf{h}^{\mathrm{H}}_{1}\mathbf{h}_{i}x_{i}+\mathbf{h}^{\mathrm{H}}_{1}\mathbf{{n}},
\end{align}
where $\mathbf{b}^{\mathrm{T}}_1=\mathbf{h}^{\mathrm{H}}_{1}$ refers to the receive conjugate beamformer for the $1$st RA UE.
The SINR of the $1$st RA UE is therefore calculated as
\begin{align}\label{eq7}
\gamma^1_{\mathrm{CB}}=\frac{P_{\mathrm{R}}|\mathbf{h}^{\mathrm{H}}_{1}\mathbf{h}_{1}|^2}{|\mathbf{h}^{\mathrm{H}}_{1}\mathbf{{n}}|^2+P_{\mathrm{R}}\sum\limits_{i=2}^{K+1}|\mathbf{h}^{\mathrm{H}}_{1}\mathbf{h}_{i}|^2}.
\end{align}
In massive MIMO, $M$ is assumed large. By using the strong law of large numbers, we have $\frac{\mathbf{h}^{\mathrm{H}}_{1}\mathbf{h}_{1}}{M} \xrightarrow{M\rightarrow \infty} 1$. Thus, (\ref{eq5}) can be simplified as follows
\begin{align}\label{eq8}
\gamma^1_{\mathrm{CB}}=\frac{\rho_{\mathrm{R}}M}{1+\rho_{\mathrm{R}}Y_K},
\end{align}
where $Y_K=\frac{1}{M}\sum\limits_{i=2}^{K+1}|\mathbf{h}^{\mathrm{H}}_{1}\mathbf{h}_{i}|^2$. It is known that $\frac{1}{M}\mathbf{h}^{\mathrm{H}}_{1}\mathbf{h}_{i}$ converges
to the standard normal distribution \cite{33}. Then, $\frac{1}{M}|\mathbf{h}^{\mathrm{H}}_{1}\mathbf{h}_{i}|^2$ converges to the Gamma distribution $\phi(y;1,1)$. From Corollary $1$ of \cite{33}, the probability density function (PDF) of ${Y}_K$ has the following approximation:
\begin{align}\label{eq9}
&f_{{Y}_K}(y) \nonumber\\
\approx& \beta\eta^{-K+1}\left[e^{-\beta y}-e^{-\frac{\sqrt{M}}{\sqrt{M}-1}y}\sum_{n=0}^{K-2}\left(\frac{\sqrt{M}}{\sqrt{M}-1}\eta\right)^n\frac{y^n}{n!} \right],
\end{align}
where $\beta=\frac{\sqrt{M}}{\sqrt{M}+K-1}$ and $\eta=\frac{K}{\sqrt{M}+K-1}$.

By using (\ref{eq8}), ${P}(\gamma^1_{\mathrm{CB}} \geq \gamma_{\mathrm{Th}}|K) $ is obtained as
\begin{align}\label{eq10}
&{P}(\gamma^1_{\mathrm{CB}} \geq \gamma_{\mathrm{Th}}|K) \nonumber\\
=&{P}(\frac{\rho_{\mathrm{R}}M}{1+\rho_{\mathrm{R}}Y_K} \geq \gamma_{\mathrm{Th}}) \nonumber\\
=& \left\{
       \begin{array}{ll}
         {P}(Y_K \leq \frac{M}{\gamma_{\mathrm{Th}}}-\frac{1}{\rho_{\mathrm{R}}}), & \hbox{if $\gamma_{\mathrm{Th}} \leq M\rho_{\mathrm{R}}$;} \\
         0, & \hbox{ otherwise.}
       \end{array}
     \right.
\end{align}

In a typical massive MIMO setup, the condition of $\gamma_{\mathrm{Th}} \leq M\rho_{\mathrm{R}}$ is usually satisfied. ${P}(\gamma^1_{\mathrm{CB}} \geq \gamma_{\mathrm{Th}}|K) $ is thus approximated as the following by using (\ref{eq9})
\begin{align}\label{eq11}
&{P}(\gamma^1_{\mathrm{CB}} \geq \gamma_{\mathrm{Th}}|K) \nonumber\\
=&\int_{0}^{\frac{M}{\gamma_{\mathrm{Th}}}-\frac{1}{\rho_{\mathrm{R}}}}f_{{Y}_K}(y)\mathrm{d}y \nonumber\\
=&1-\eta^{-K+1}e^{-\beta{\Lambda}} \nonumber\\
+&(1-\eta)\sum_{n=0}^{K-2}\frac{1}{n!}\eta^{n-K+1}\Gamma(n+1,\frac{\sqrt{M}}{\sqrt{M}-1}{\Lambda}),
\end{align}
where ${\Lambda}=\frac{M}{\gamma_{\mathrm{Th}}}-\frac{1}{\rho_{\mathrm{R}}}$. $\Gamma(s,x)=\int_{x}^{\infty}t^{s-1}e^{-t}\mathrm{d}t $ is the
upper incomplete Gamma function.

Substituting (\ref{eq3}), (\ref{eq4}) and (\ref{eq11}) into (\ref{eq2}), the approximate expression of the success probability of the grant-free RA with CB is obtained.

\subsection{Success Probability of Grant-Free RA with Zero-Forcing Beamforming}
Similar to the case of CB, the success probability of the grant-free RA with ZFB is expressed as
\begin{align}\label{eq12}
&{P}_\mathrm{ZF} \nonumber\\
=&\sum_{K=0}^{N_a-1}\bar{P}(K)\sum_{S=1}^{\min\{P-1, K\}}\tilde{P}_{\mathrm{ZF}}(S|K){P}(\gamma^1_{\mathrm{ZF}} \geq \gamma_{\mathrm{Th}}|K,S),
\end{align}
where $\gamma^1_{\mathrm{ZF}}$ is the SINR corresponding to the $1$st RA UE after ZFB at the BS and the subscript $\mathrm{ZF}$ indicates that the ZFB is utilized.
$\bar{P}(K)$ is defined in (\ref{eq3}), referring to the probability that $K$ other RA UEs select the same channel with the $1$st RA UE.

$\tilde{P}_{\mathrm{ZF}}(S|K)$ refers to the probability that $S$ preambles in total are selected by the $K$ cochannel RA UEs and no preamble collisions occur between the $K$ RA UEs and the $1$st RA UE, given that $K$ other RA UEs select the same channel with the $1$st RA UE, which is given by
\begin{align}\label{eq13}
\tilde{P}_{\mathrm{ZF}}(S|K)=\frac{{\binom{P-1}{S}}S!\stirling{K}{S}}{P^K},
\end{align}
where $\stirling{K}{S}=\frac{1}{S!}\sum\limits_{j=0}^{S}(-1)^{S-j}\binom{S}{j}j^K$ is the Stirling numbers of the second kind \cite{24}. It is the number of ways to partition a set of $K$ objects into $S$ non-empty subsets. In short words, it is the number of ways to distribute $K$ distinguishable elements into $S$ indistinguishable receptacles with no receptacle empty.

${P}(\gamma^1_{\mathrm{ZF}} \geq \gamma_{\mathrm{Th}}|K,S) $ is the probability of $\gamma^1_{\mathrm{ZF}} \geq \gamma_{\mathrm{Th}}$, given that 1) $K$ other RA UEs share the same channel resource with the $1$st RA UE for data transmissions; 2) $S$ preambles in total are selected by the $K$ RA UEs, which are different from the one selected by the $1$st RA UE. The mathematical expression of ${P}(\gamma^1_{\mathrm{ZF}} \geq \gamma_{\mathrm{Th}}|K,S) $ is analyzed and derived in the followings.

Unlike the case of CB, ZFB utilizes all the estimated channel information to decode the data. As $S$ preambles in total are selected by the $K$ RA UEs, the BS would obtain $S$ channel estimates by detecting the $S$ selected preambles, where $S \leq K$.
When $S<K$, some of the $S$ channel estimates would be the superposition of multiple RA UEs' channel responses.
As a result, the ZFB with the incorrect channel estimates would impose interference to the targeted UE, which has to be taken into consideration in the derivations.

In this regard, we define $\mathcal{W}_s: s\in \mathcal{S}=\{1,2,\ldots,S\}$ as the nonempty subset of RA UEs that select the $s$th preamble in the selected $S$ preambles. $\mathcal{W}_s \subset \mathcal{I}$, 
where $\mathcal{I}$ is the index set of the $K$ RA UEs defined in (\ref{eq5}).
We also define that $w_s$ is an arbitrary element in subset $\mathcal{W}_s$ and $\mathcal{W}_s\setminus{w_s}$ refers to the subset $\mathcal{W}_s$ excluding the element $w_s$.
In addition, we define $\mathbf{a}_s$, for $s\in \mathcal{S}$, as the channel response estimated via the detection of the $s$th preamble, where $\mathbf{a}_s=\sum\limits_{j\in \mathcal{W}_s}\mathbf{h}_j$ as the impact of noise on the channel estimation is ignored herein.

With the above definitions and $\mathbf{r}$ in (\ref{eq5}), the recovered signal vector $\mathbf{y} \in \mathbb{C}^{S+1} $ after ZFB is given as
\begin{align}\label{eq14}
\mathbf{y}&=\mathbf{B}\mathbf{r}=(\mathbf{A}^{\mathrm{H}}\mathbf{A})^{-1}\mathbf{A}^\mathrm{H}\mathbf{r},
\end{align}
where $\mathbf{A}=[\mathbf{h}_1,\mathbf{a}_1,\mathbf{a}_2,\ldots,\mathbf{a}_S] \in \mathbb{C}^{M \times (S+1) } $
and $\mathbf{B}=(\mathbf{A}^{\mathrm{H}}\mathbf{A})^{-1}\mathbf{A}^\mathrm{H} \in \mathbb{C}^{(S+1)\times M }$ refers to the receive ZF beamformer.

To make the derivations tractable, we apply the following transformation to $\mathbf{r}$:
\begin{align}\label{eq15}
\mathbf{r}&=\sum_{i=1}^{K+1}\sqrt{P_{\mathrm{R}}}\mathbf{{h}}_{i}x_{i}+\mathbf{{n}} \nonumber\\
&=\sqrt{P_{\mathrm{R}}}\mathbf{A}\hat{\mathbf{x}}+\underbrace{\sum\limits_{j\in \{\mathcal{W}_s\setminus{w_s}: s\in \mathcal{S}\}}\sqrt{P_{\mathrm{R}}}\mathbf{h}_j(x_j-x_{w_s})}_{\mathrm{Sum}~\mathrm{of}~K-S~\mathrm{terms}}+\mathbf{{n}},
\end{align}
where $\hat{\mathbf{x}}=[x_1, x_{w_1},x_{w_2},\ldots,x_{w_S}]^{\mathrm{T}}\in \mathbb{C}^{S+1 }$. Then, two cases are discussed as follows:

\subsubsection{Case of $K>S$}

Substituting (\ref{eq15}) into (\ref{eq14}), we have that
\begin{align}\label{eq16}
\mathbf{y}&=\sqrt{P_{\mathrm{R}}}\hat{\mathbf{x}}+\sum\limits_{j\in \{\mathcal{W}_s\setminus{w_s}: s\in \mathcal{S}\}}\sqrt{P_{\mathrm{R}}}\mathbf{B}\mathbf{h}_j(x_j-x_{w_s})+\mathbf{B}\mathbf{{n}}.
\end{align}
Correspondingly, the recovered signal for the $1$st RA UE at the BS is derived as
\begin{align}\label{eq17}
y_{1}&=\sqrt{P_{\mathrm{R}}}x_1+\sum\limits_{j\in \{\mathcal{W}_s\setminus{w_s}: s\in \mathcal{S}\}}\sqrt{P_{\mathrm{R}}}\mathbf{b}^{\mathrm{T}}_{1}\mathbf{h}_jx_{w_s, j}+\mathbf{b}^{\mathrm{T}}_{1}\mathbf{{n}},
\end{align}
where $\mathbf{b}^{\mathrm{T}}_{1}$ denotes the $1$st row of matrix $\mathbf{B}$. $x_{w_s, j}=x_j-x_{w_s}$ and $\mathbb{E}[|x_{w_s, j}|^2]=2$ since the data symbols are independent to each other.

From (\ref{eq17}), the SINR of the $1$st RA UE is written as
\begin{align}\label{eq18}
\gamma^1_{\mathrm{ZF}}&=\frac{\rho_{\mathrm{R}}}{\rho_{\mathrm{R}}\Big|\sum\limits_{j\in \{\mathcal{W}_s\setminus{w_s}: s\in \mathcal{S}\}}\sqrt{2}\mathbf{b}^{\mathrm{T}}_{1}\mathbf{h}_j\Big|^2  +\|\mathbf{b}^{\mathrm{T}}_{1}\|^2} \nonumber\\
&= \frac{\rho_{\mathrm{R}}U_1}{\rho_{\mathrm{R}}Z_{K-S}+1},
\end{align}
where
\begin{align*}
U_1=\|\mathbf{b}^{\mathrm{T}}_{1}\|^{-2}.
\end{align*}
And
\begin{align*}
Z_{K-S}=\big|\underbrace{\sum\limits_{j\in \{\mathcal{W}_s\setminus{w_s}: s\in \mathcal{S}\}}\frac{\sqrt{2}\mathbf{b}^{\mathrm{T}}_{1}}{\|\mathbf{b}^{\mathrm{T}}_{1}\|}\mathbf{h}_j}_{\mathrm{Sum}~\mathrm{of}~K-S~\mathrm{terms}}\big|^2.
\end{align*}

Since $\|\mathbf{b}^{\mathrm{T}}_{1}\|^{2}=[(\mathbf{A}^{\mathrm{H}}\mathbf{A})^{-1}]_{11}$, $U_1$ follows an Erlang distribution
with shape parameter $M-S$ and scale parameter $1$ \cite{25}. Therefore, the PDF of $U_1$ is given by
\begin{align}\label{eq19}
f_{{U}_1}(u)=\frac{e^{-u}}{(M-S-1)!}u^{M-S-1},~~~u>0.
\end{align}

On the other hand, since it is difficult to derive the exact distribution of $Z_{K-S}$, an approximation is considered herein. Specifically, by ignoring the correlations between the $K-S$ terms in the sum, we approximate $Z_{K-S}$ as
\begin{align}\label{eq20}
Z_{K-S}
&\approx \sum\limits_{j\in \{\mathcal{W}_s\setminus{w_s}: s\in \mathcal{S}\}}\big|\frac{\sqrt{2}\mathbf{b}^{\mathrm{T}}_{1}\mathbf{h}_j}{\|\mathbf{b}^{\mathrm{T}}_{1}\|}\big|^2  \nonumber\\
&=\sum\limits_{j\in \{\mathcal{W}_s\setminus{w_s}: s\in \mathcal{S}\}}\big|K_j\big|^2,
\end{align}
where $K_j=\frac{\sqrt{2}\mathbf{b}^{\mathrm{T}}_{1}\mathbf{h}_j}{\|\mathbf{b}^{\mathrm{T}}_{1}\|}$. It is proved that $K_j$ follows standard complex normal distribution \cite{28}. Therefore, $Z_{K-S}$ is approximated as the sum of $K-S$ statistically independent and identically distributed exponential random variables. As a result, the PDF of $Z_{K-S}$ is given by \cite{26}\cite{27}
\begin{align}\label{eq21}
f_{{Z}_{K-S}}(z)\approx\frac{1}{(K-S-1)!}z^{K-S-1}e^{-z},~~~z>0.
\end{align}

With (\ref{eq18}), (\ref{eq19}), and (\ref{eq21}), we obtain that
\begin{align}\label{eq22}
&{P}(\gamma^1_{\mathrm{ZF}} \geq \gamma_{\mathrm{Th}}|K,S) \nonumber\\
\approx&e^{-\frac{\gamma_{\mathrm{Th}}}{\rho_{\mathrm{R}}}}\sum_{p=0}^{M-S-1}\sum_{q=0}^{p}{\binom{p}{q}}\frac{\gamma^p_{\mathrm{Th}}}{p!}
\frac{\rho^{q-p}_{\mathrm{R}}}{(K-S-1)!}\frac{\Gamma(K-S+q,0)}{(1+\gamma_{\mathrm{Th}})^{K-S+q}}.
\end{align}

\subsubsection{Case of $K=S$}

In the case of $K=S$, no preamble collisions occur among the $K$ RA UEs, i.e., their channel estimations would be successful, and the multiuser interference could be suppressed effectively with ZFB. The recovered signal for the $1$st RA UE at the BS in (\ref{eq17}) is hence simplified as
\begin{align}\label{eq23}
y_{1}&=\sqrt{P_{\mathrm{R}}}x_1+\mathbf{b}^{\mathrm{T}}_{1}\mathbf{{n}}.
\end{align}

Accordingly, the SINR of the $1$st RA UE is written as
\begin{align}\label{eq24}
\gamma^1_{\mathrm{ZF}}=\rho_{\mathrm{R}}U_1.
\end{align}
With (\ref{eq19}), $\mathbb{P}(\gamma^1_{\mathrm{ZF}} \geq \gamma_{\mathrm{Th}}|K,S)$ in the case of $K=S$ is therefore given by
\begin{align}\label{eq25}
&{P}(\gamma^1_{\mathrm{ZF}} \geq \gamma_{\mathrm{Th}}|K,S) \nonumber\\
=&{P}(\gamma^1_{\mathrm{ZF}} \geq \gamma_{\mathrm{Th}}|K,K)  \nonumber\\ =&e^{-\frac{\gamma_{\mathrm{Th}}}{\rho_{\mathrm{R}}}}\sum_{p=0}^{M-K-1}\frac{1}{p!}(\frac{\gamma_{\mathrm{Th}}}{\rho_{\mathrm{R}}})^p.
\end{align}

Substituting (\ref{eq3}), (\ref{eq13}), (\ref{eq22}) and (\ref{eq25}) into (\ref{eq12}), the approximate expression of the success probability of the grant-free RA with ZFB is obtained.

\subsection{Analysis under Assumption of Even User Distribution}
In this subsection, we simplify the analytic expressions of the success probability under assumption of even user distribution (EUD) over channels to shed light on how $M$ and $P$ impact on the success probability of the grant-free RA with massive MIMO.
Herein, the EUD assumes a genie user distribution management such that the number of RA UEs distributed on each RA channel is exact $N_a/C$. Please note that, the number of RA UEs on each RA channel is random in practice, therefore the EUD is desirable but unattainable in the grant-free RA.
The reasons of making the assumption for the analysis lie in:
\begin{enumerate}
  \item The assumption of the EUD eliminates the effects of random user distribution over channels, which simplifies the analytic expressions in (\ref{eq2}) and (\ref{eq12}), thus making it more straightforward to understand how $M$ and $P$ impact on the success probability;
  \item Although the assumption of the EUD is impractical, the number of RA UEs distributed on each RA channel with random user distribution is close to $N_a/C$ when $N_a\gg C$. Thus, it is a reasonable approximation to the real case with random user distribution.
  \item The EUD provides a performance upper bound for the grant-free RA. By comparing to the upper bound, we can evaluate the performance gap to this desirable performance.

\end{enumerate}

With the EUD assumption, one proposition is hereby introduced.

\newtheorem{Prop}{Proposition}
\begin{Prop}
In the grant-free RA with EUD, when $M$ is sufficiently large, the success probability approaches to $(1-\frac{1}{P})^{\frac{N_a}{C}-1}$ for both CB and ZFB.
\end{Prop}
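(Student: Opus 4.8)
\emph{Proof strategy.} The plan is to exploit that under the EUD assumption the outer sum over $K$ in (\ref{eq2}) and (\ref{eq12}) collapses to a single term. With a genie distribution placing exactly $N_a/C$ RA UEs on each channel, the $1$st RA UE always shares its channel with precisely $K_0 \triangleq N_a/C - 1$ other RA UEs, so $\bar{P}(K)$ degenerates to the indicator of $\{K=K_0\}$. Hence $P_{\mathrm{CB}} = \tilde{P}_{\mathrm{CB}}(K_0)\,P(\gamma^1_{\mathrm{CB}} \geq \gamma_{\mathrm{Th}}\,|\,K_0)$ and $P_{\mathrm{ZF}} = \sum_{S=1}^{\min\{P-1,K_0\}}\tilde{P}_{\mathrm{ZF}}(S|K_0)\,P(\gamma^1_{\mathrm{ZF}} \geq \gamma_{\mathrm{Th}}\,|\,K_0,S)$. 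Since (\ref{eq4}) already gives $\tilde{P}_{\mathrm{CB}}(K_0) = (1-1/P)^{K_0} = (1-1/P)^{N_a/C-1}$, the CB claim reduces to showing $P(\gamma^1_{\mathrm{CB}} \geq \gamma_{\mathrm{Th}}\,|\,K_0) \to 1$ as $M \to \infty$. For ZFB I will additionally need the combinatorial identity $\sum_{S=1}^{\min\{P-1,K_0\}}\binom{P-1}{S}S!\,\stirling{K_0}{S} = (P-1)^{K_0}$, obtained by classifying the maps of a $K_0$-element set into a $(P-1)$-element set by the size $S$ of their image; via (\ref{eq13}) this yields $\sum_{S}\tilde{P}_{\mathrm{ZF}}(S|K_0) = (P-1)^{K_0}/P^{K_0} = (1-1/P)^{N_a/C-1}$.

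For the CB conditional probability I would use the closed form (\ref{eq11}) and track each factor as $M\to\infty$ with $K_0$ fixed: $\beta = \sqrt{M}/(\sqrt{M}+K_0-1)\to 1$, $\eta = K_0/(\sqrt{M}+K_0-1)\to 0$, and $\Lambda = M/\gamma_{\mathrm{Th}} - 1/\rho_{\mathrm{R}}\to\infty$ (so for $M$ large the regime $\gamma_{\mathrm{Th}}\le M\rho_{\mathrm{R}}$ of (\ref{eq10}) holds). The term $\eta^{-K_0+1}e^{-\beta\Lambda}$ vanishes because $\eta^{-(K_0-1)}$ grows only polynomially in $\sqrt{M}$ whereas $e^{-\beta\Lambda}$ decays exponentially in $M$; each summand $(1-\eta)\,\frac{1}{n!}\,\eta^{n-K_0+1}\,\Gamma(n+1,\frac{\sqrt M}{\sqrt M-1}\Lambda)$ with $0\le n\le K_0-2$ vanishes for the same reason, since $\Gamma(n+1,x)=n!\,e^{-x}\sum_{k=0}^{n}x^k/k!$ decays exponentially in $\Lambda$ and hence dominates the polynomial blow-up of $\eta^{n-K_0+1}$. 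Thus (\ref{eq11}) $\to 1$ and $P_{\mathrm{CB}}\to(1-1/P)^{N_a/C-1}$. Equivalently and more transparently: from (\ref{eq8}) the numerator $\rho_{\mathrm{R}}M$ diverges while $Y_{K_0}$ stays $O(1)$ in distribution, so $\gamma^1_{\mathrm{CB}}\to\infty$ in probability.

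For ZFB the inner sum has a fixed, finite number of terms (at most $\min\{P-1,K_0\}$, independent of $M$), so it suffices to prove $P(\gamma^1_{\mathrm{ZF}}\geq\gamma_{\mathrm{Th}}\,|\,K_0,S)\to 1$ for each $S$ and then pass the limit term by term. When $S=K_0$, (\ref{eq25}) is $e^{-\gamma_{\mathrm{Th}}/\rho_{\mathrm{R}}}\sum_{p=0}^{M-K_0-1}\frac{1}{p!}(\gamma_{\mathrm{Th}}/\rho_{\mathrm{R}})^p$, a truncated exponential series tending to $e^{-\gamma_{\mathrm{Th}}/\rho_{\mathrm{R}}}e^{\gamma_{\mathrm{Th}}/\rho_{\mathrm{R}}}=1$. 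When $S<K_0$, I would argue from (\ref{eq18})--(\ref{eq21}): $U_1\sim\mathrm{Erlang}(M-S,1)$ gives $P(U_1\ge t)=e^{-t}\sum_{p=0}^{M-S-1}t^p/p!\to 1$ for every fixed $t\ge 0$, while $Z_{K_0-S}\sim\mathrm{Erlang}(K_0-S,1)$ has an $M$-independent density; conditioning on $Z_{K_0-S}=z$ and applying dominated convergence with dominating function $f_{Z_{K_0-S}}$ gives $P(\gamma^1_{\mathrm{ZF}}\ge\gamma_{\mathrm{Th}}\,|\,K_0,S)=\int_0^\infty P(U_1\ge\gamma_{\mathrm{Th}}z+\gamma_{\mathrm{Th}}/\rho_{\mathrm{R}})\,f_{Z_{K_0-S}}(z)\,\mathrm{d}z\to 1$. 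Hence $P_{\mathrm{ZF}}\to\sum_S\tilde{P}_{\mathrm{ZF}}(S|K_0)=(1-1/P)^{N_a/C-1}$.

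I expect the main obstacle to be the CB case: the array gain there sits inside a ratio, so in the closed form (\ref{eq11}) the correction terms are products of polynomially-growing factors $\eta^{n-K_0+1}$ and exponentially-small incomplete-Gamma tails, and their vanishing must be justified by an explicit exponential-versus-polynomial comparison (or by sidestepping (\ref{eq11}) altogether and working with the limiting distribution of $Y_{K_0}$ from (\ref{eq8})). The ZFB side is comparatively routine once the Stirling-number identity is established and the finiteness of the inner sum is used to move the limit inside.
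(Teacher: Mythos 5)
Your proposal is correct, and it is in fact more complete than the paper's own proof. For CB you and the paper perform the same reduction: under EUD the binomial mixture over $K$ collapses to the single value $K_0=N_a/C-1$, the preamble-collision factor $(1-\frac{1}{P})^{N_a/C-1}$ factors out, and everything hinges on showing ${P}(\gamma^1_{\mathrm{CB}}\geq\gamma_{\mathrm{Th}}\,|\,K_0)\to 1$. The paper disposes of this last step by citing the asymptotic deterministic equivalent $\overline{\gamma}^1_{\mathrm{CB}}=\frac{M}{N_a/C}v$ of \cite{33}, which grows linearly in $M$ and hence eventually exceeds $\gamma_{\mathrm{Th}}$; your ``more transparent'' alternative (the numerator $\rho_{\mathrm{R}}M$ in (\ref{eq8}) diverges while $Y_{K_0}$ stays $O(1)$ in distribution) is essentially this same argument, while your primary route --- taking $M\to\infty$ directly in the closed form (\ref{eq11}) and verifying that the $\eta^{n-K_0+1}$ blow-up is killed by the exponentially small incomplete-Gamma tails --- is a self-contained asymptotic computation the paper does not carry out, and it has the added virtue of confirming that the paper's own approximate expression reproduces the limit. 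For ZFB the paper simply states that the proof is ``similar and omitted,'' whereas you supply it: the identity $\sum_{S}\binom{P-1}{S}S!\stirling{K_0}{S}=(P-1)^{K_0}$ (counting functions by image size) shows the collision factors sum to $(1-\frac{1}{P})^{K_0}$, and the term-by-term limits of (\ref{eq25}) (truncated exponential series) and of the $S<K_0$ case (Erlang tail of $U_1$ plus dominated convergence against the $M$-independent density of $Z_{K_0-S}$) are exactly the missing ingredients; the finiteness of the inner sum justifies exchanging limit and sum. The only caveats are cosmetic: your argument, like the paper's, works within the approximations already made in (\ref{eq9}) and (\ref{eq20})--(\ref{eq21}), and the degenerate case $K_0=0$ (where the formulas with empty sums still give the claimed value $1$) deserves a one-line remark.
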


\begin{proof}
We only provide the proof for CB herein. The proof for ZFB is similar and omitted due
to space constraints.

With CB and EUD, the success probability in (\ref{eq2}) could be modified as
\begin{align}\label{eq26}
{P}_\mathrm{CB}=(1-\frac{1}{P})^{\frac{N_a}{C}-1}{P}\big(\gamma^1_{\mathrm{CB}} \geq \gamma_{\mathrm{Th}}|\frac{N_a}{C}-1\big).
\end{align}
When $M$ approaches to infinity, $\gamma^1_{\mathrm{CB}}$ converges to its asymptotic deterministic equivalence $\overline{\gamma}^1_{\mathrm{CB}}$ \cite{33}, which is given by
\begin{align}\label{eq27}
\overline{\gamma}^1_{\mathrm{CB}}=\frac{M}{{N_a}/{C}}v,
\end{align}
where $v=\frac{\rho_{\mathrm{R}}}{1+\rho_{\mathrm{R}}\frac{{N_a}/{C}-1}{{N_a}/{C}}}$. Since $\frac{N_a}{C}$ and $v$ are constant, the value of $\overline{\gamma}^1_{\mathrm{CB}}$ is increased with $M$. When $M$ increases to a certain value, $\gamma^1_{\mathrm{CB}} > \gamma_{\mathrm{Th}}$ and ${P}\big(\gamma^1_{\mathrm{CB}} \geq \gamma_{\mathrm{Th}}|\frac{N_a}{C}-1\big)=1$. Therefore, the success probability approaches to $(1-\frac{1}{P})^{\frac{N_a}{C}-1}$.

We conclude the proof.
\end{proof}

\newtheorem{remark}{Remark}
\begin{remark}
 In a massive MIMO deployment, with large $M$, Proposition $1$ indicates that the success probability of the grant-free RA with massive MIMO mainly depends on $P$, where $P$ is the number of orthogonal preambles.
\end{remark}
\begin{remark}
Based on Proposition $1$, it is clear that the success probability would approach to $1$ when $P$ approaches to infinity and $M$ is sufficiently large.
\end{remark}
\begin{remark}
In the case of $N_a\gg C$, the number of RA UEs distributed on each RA channel with random user distribution would be close to $N_a/C$. Then, the EUD performance of the grant-free RA with massive MIMO could be approximately achieved by the practical case of random user distribution.
\end{remark}
\section{Numerical Results}
In this section, numerical results are presented to verify the accuracy of the analyses in Section III and also the effectiveness of the grant-free RA with massive MIMO. To evaluate the performance, comparisons with three performance baselines are made. The three performance baselines are considered as three different upper bounds of the success probability of the grant-free RA with massive MIMO.
\begin{itemize}
  \item For Upper Bound $1$, EUD is assumed, such that the number of RA UEs distributed on each RA channel is exact $N_a/C$.
  \item For Upper Bound $2$, EUD and $M=\infty$ are assumed. From Proposition $1$, we see that the corresponding success probability is $(1-\frac{1}{P})^{\frac{N_a}{C}-1}$.
  \item For Upper Bound $3$, random user distribution and $P=\infty$ are assumed, i.e., $N_a$ RA UEs are randomly distributed among $C$ RA channels and no preamble collisions occur during the grant-free RA.

\end{itemize}
Although the three upper bounds are impractical considering the randomness of user distribution over channels and finite $P$ and $M$, they would help to understand the effects of these key system parameters on the success probability of the grant-free RA with massive MIMO.
Besides, we define $\eta=N_a/C$ for notation simplicity, which represents the average load on each channel. Please note that, $\eta$ is also a measure of channel reuse efficiency when its corresponding success probability satisfies the system requirement. 
Simulation parameters are summarized in Table \ref{table1}.
\begin{table}[!h]
\renewcommand{\arraystretch}{2}
\caption{Simulation parameters}\label{table1} \centering
\begin{tabular}{>{\centering}m{4cm}|>{\centering}m{4cm}|>{\centering}m{4cm}|}
\hline

\multicolumn{1}{|c|}{Number of antennas $M$}  & \multicolumn{2}{c|}{$50, ~100, ~200, ~400$} \tabularnewline
\hline

\multicolumn{1}{|c|}{{Average load on each channel $\eta$} } & \multicolumn{2}{c|}{{$1 \thicksim 20$} } \\
\hline


\multicolumn{1}{|c|}{Number of orthogonal preambles $P$}  & \multicolumn{2}{c|}{$64, ~128, ~256$} \tabularnewline

\hline

\multicolumn{1}{|c|}{SINR threshold $\gamma_\mathrm{Th}$}  & \multicolumn{2}{c|}{$8$dB} \tabularnewline

\hline
\end{tabular}
\end{table}

\subsection{Success Probability by Conjugate Beamforming under Independent Rayleigh Fading Channel}

The success probabilities as a function of $\rho_\mathrm{R}$ with different values of $M$ and $P$ are presented in Fig. \ref{fig4}, for grant-free RA with CB in massive MIMO, under independent Rayleigh fading channel. As shown in this figure, the analytic results are close to the simulation ones and tighter results are observed as $M$ grows. Moreover, it is clear that the success probability increases as $M$ and $P$ increase.
When $\rho_\mathrm{R}$ increases, the success probability tends to get saturated and the saturated success probability primarily depends on $M$ and $P$.

\begin{figure}[!h]
\centering
\includegraphics[width=3.7in]{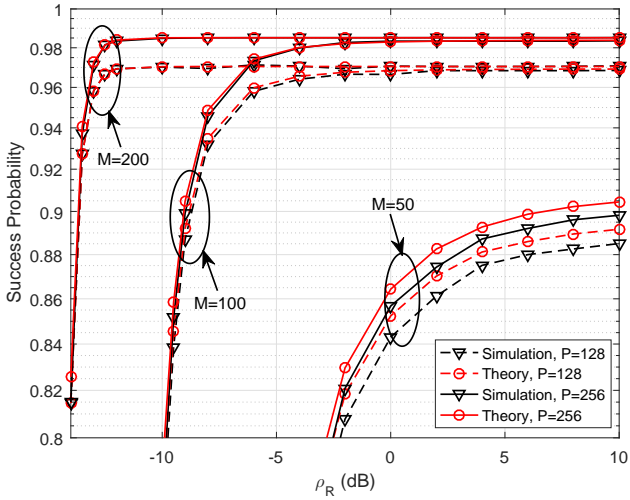}
\caption{Success probability versus $\rho_\mathrm{R}$ with CB under independent Rayleigh fading channel, $\eta=4$.} \label{fig4}
\end{figure}

\begin{figure}[!h]
\centering
\includegraphics[width=3.7in]{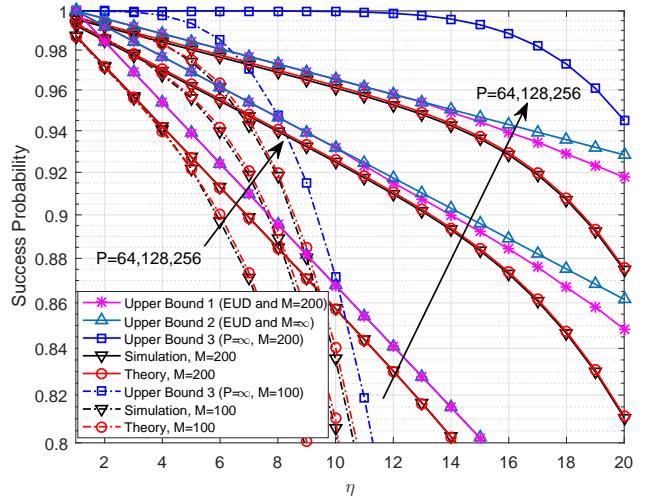}
\caption{Success probability versus $\eta$ with CB under independent Rayleigh fading channel, with different values of $P$ and $M$, $\rho_\mathrm{R}=0$dB.} \label{fig5}
\end{figure}
In Fig. \ref{fig5}, we plot the success probability as a function of $\eta$ at $\rho_\mathrm{R}=0$dB. Various $P$ and $M$ are considered in the figure. As expected, the analytic results closely match with the simulation ones with different $\eta$, $M$, and $P$. Also, it is plain to see that the grant-free RA with CB is capable of simultaneously supporting a number of RA UEs, which is several times the number of RA channels, with high success probability. For example, when $P=256$ and $M=200$, on average $\eta=5$ RA UEs could be accommodated over a same channel with $98\%$ success probability.
As $\eta$ gets larger, we see that the grant-free RA performance is compromised and the success probability declines gradually. It is evident that the performance loss could be compensated by increasing $M$ or $P$.

As observed from Fig. \ref{fig5}, Upper Bound $1$ (the case with the EUD and $M=200$) almost overlaps with Upper Bound $2$ (the case with the EUD and $M=\infty$) with various $P$ for $\eta \leq 14$, which verifies Proposition $1$. Over the range of $5\leq \eta \leq 14$, the success probability is significantly improved by increasing $M$ from $100$ to $200$. As $M$ increases to $200$, it becomes almost parallel with Upper Bound $1$ and Upper Bound $2$ with various $P$. This indicates that $M=200$ is adequate within this range of $\eta$.
Over the range of $2\leq \eta\leq 5$, however, there is little improvement in the success probability with various $P$ when $M$ grows from $100$ to $200$, which shows that $M=100$ is adequate with this range of $\eta$. These observations imply that further increasing $M$ would be of little help to the success probability of the grant-free RA as long as $M \gg \eta$.
In a typical massive MIMO deployment, it is tacit knowledge that the number of antennas is much greater than the number of served UEs over a same channel. Therefore, we conclude that $P$ would dominate the success probability with a wide range of $\eta$ in a typical massive MIMO system, which validates Remark $1$ in Section III. D. In addition, we see that Upper Bound $3$ (the case with $P=\infty$) approaches to $1$ when $M \gg \eta$. This observation confirms Remark $2$. Results also show a small performance gap between the success probability and Upper Bound $1$ with various $P$ for $\eta \leq 14$, which validates Remark $3$.



\subsection{Success Probability by Zero Forcing Beamforming under Independent Rayleigh Fading Channel }

In Fig. \ref{fig7}, success probabilities as a function of $\rho_\mathrm{R}$ with various $M$ and $P$ at $\eta=4$ are illustrated for grant-free RA with ZFB, under independent Rayleigh fading channel. As we can see, the analytic results agree with the simulation ones. Similar to what we observed in Fig. \ref{fig4}, the success probability shows a tendency to saturation with increase of $\rho_\mathrm{R}$. 
Compared to the results with CB in Fig. \ref{fig4}, the success probability with ZFB of $M=50$ almost converges to the one of $M=200$ with different $P$ when $\rho_\mathrm{R}\geq -6$dB, which shows that the ZFB requires less antennas to achieve a given success probability.
Moreover, we see that the grant-free RA with ZFB performs much better at low $\rho_\mathrm{R}$ (e.g., $\rho_\mathrm{R}\leq0$dB) and limited $M$ (e.g., $M=50$). These observations reveal that the ZFB is a better option for the grant-free RA when the receive signals of RA UEs are weak and the number of antennas is limited.
\begin{figure}[!h]
\centering
\includegraphics[width=3.7in]{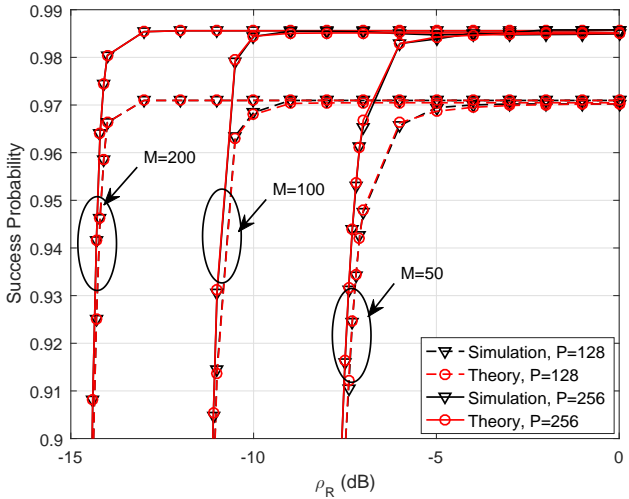}
\caption{Success probability versus $\rho_\mathrm{R}$ with ZFB under independent Rayleigh fading channel, $\eta=4$.} \label{fig7}
\end{figure}

In Fig. \ref{fig8}, we plot the success probability with ZFB as a function of $\eta$ at $\rho_\mathrm{R}=0$dB. We see that the analytic results with ZFB match exactly with the simulation ones under different values of $\eta$, $M$, and $P$. Similar to what we observed in Fig. \ref{fig5}, results validate Proposition $1$ as well as Remark $1-3$. Compared to the case of CB, it is shown that the grant-free RA with ZFB is more effective in the sense that less antennas are required to achieve a given success probability with the same $\eta$. 
In order to accommodate on average $\eta=5$ RA UEs over a same channel with $98\%$ success probability, for instance, much less antennas (e.g., $M=50$) are required by the ZFB, in contrast to $M=200$ by the CB.

\begin{figure}[!h]
\centering
\includegraphics[width=3.7in]{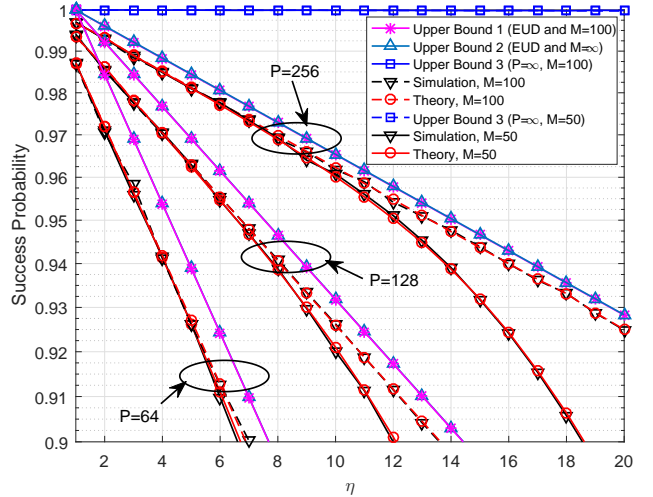}
\caption{Success probability versus $\eta$ with ZFB under independent Rayleigh fading channel, with different values of $P$ and $M$, $\rho_\mathrm{R}=0$dB.} \label{fig8}
\end{figure}

\subsection{Success Probability by Zero-Forcing Beamforming under More Realistic Channel Model }
From Fig. \ref{fig10} to Fig. \ref{fig12}, we consider the success probability of the grant-free RA with ZFB under a more realistic channel model. Specifically, spatially correlated Rayleigh fading channel is considered herein, which has been widely used in MIMO systems for analysis and simulations \cite{40}\cite{41}. The channel response between the BS and an arbitrary RA UE is modelled by
\begin{align*}
\mathbf{g}=\sqrt{\ell}\mathbf{h}=\sqrt{\ell}\mathbf{A}\mathbf{v},
\end{align*}
where $\ell$ denotes large scale fading coefficient between RA UE and BS. $\mathbf{h}=\mathbf{A}\mathbf{v}$ stands for small scale fading vector between RA UE and BS. $\mathbf{A}\in \mathbb{C}^{M\times Q}$ is antenna correlation matrix. $\mathbf{v}\sim \mathcal{CN}(0,\mathbf{I}_{Q})$ is independent fast-fading channel vector, where $Q$ is the number of independently faded paths.

For a uniform linear array, $\mathbf{A}=[\mathbf{a}(\phi_1),\ldots,\mathbf{a}(\phi_Q)]$ is composed of the
steering vector $\mathbf{a}(\phi_q)$ defined as
\begin{align*}
\mathbf{a}(\phi_q)=\frac{1}{\sqrt{Q}}[1,e^{-\textrm{j}2\pi\omega\cos(\phi_q)},\ldots,e^{-\textrm{j}2\pi\omega(M-1)\cos(\phi_q)}]^{\mathrm{T}},
\end{align*}
where $\phi_q$, $q=1,\ldots,Q$, is the angle of arrival (AOA) of the $q$th path, which is uniformly generated within $[\phi_\mathrm{A}-\frac{\phi_\mathrm{S}}{2}, \phi_\mathrm{A}+\frac{\phi_\mathrm{S}}{2}]$.
And $\phi_\mathrm{A}$ and $\phi_\mathrm{S}$ are defined as the azimuth angle of the UE location and the angle spread, respectively. $\omega$ is the antenna spacing in multiples of the wavelength. In practical wireless scenarios, different UEs have different antenna correlation matrix $\mathbf{A}$ due to their random distributions in the cell. Simulation parameters of spatially correlated Rayleigh fading channel are given in Table \ref{table2}.
\begin{table}[!h]
\renewcommand{\arraystretch}{1}
\caption{Simulation Parameters of Spatially Correlated Rayleigh Fading Channel}\label{table2} \centering
\begin{tabular}{>{\centering}m{0.8cm}|>{\centering}m{1.8cm}|>{\centering}m{2cm}|}
\hline

\multicolumn{1}{|c|}{Angle spread $\phi_\mathrm{S}$}  & \multicolumn{2}{c|}{$20^{\circ}$} \tabularnewline
\hline

\multicolumn{1}{|c|}{Azimuth angle $\phi_\mathrm{A}$}  & \multicolumn{2}{c|}{Uniform distribution within $[-60^{\circ},60^{\circ}]$ } \tabularnewline
\hline

\multicolumn{1}{|c|}{Antenna spacing $\omega$}  & \multicolumn{2}{c|}{$1/2$} \tabularnewline
\hline

\multicolumn{1}{|c|}{Number of faded paths $Q$}  & \multicolumn{2}{c|}{${M}/{2}$} \tabularnewline

\hline
\end{tabular}
\end{table}

\begin{figure}[!h]
\centering
\includegraphics[width=3.7in]{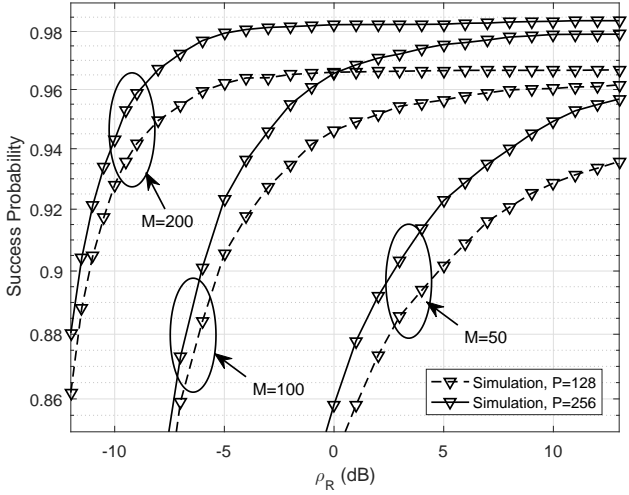}
\caption{Success probability versus $\rho_\mathrm{R}$ with different $M$ under spatially correlated Rayleigh fading channel, $\eta=4$.} \label{fig10}
\end{figure}

\begin{figure}[!h]
\centering
\includegraphics[width=3.7in]{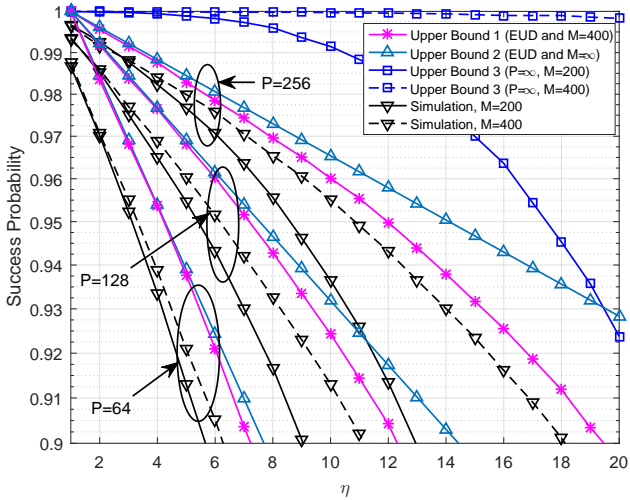}
\caption{Success probability versus $\eta$ with different $P$ under spatially correlated Rayleigh fading channel, $\rho_\mathrm{R}=0$dB.} \label{fig11}
\end{figure}

Comparing Fig. \ref{fig10} and Fig. \ref{fig11} to Fig. \ref{fig7} and Fig. \ref{fig8}, 
 the success probability evidently degrades under the spatially correlated Rayleigh fading channel due to the channel spatial correlations among antennas. Nevertheless, employing more antennas would lessen the effect of channel spatial correlations and compensate the performance loss. Furthermore, it is obersved that Proposition $1$ and Remark $1-3$ is valid under the spatially correlated Rayleigh fading channel, as long as $M$ increases to a sufficiently large value.


\subsection{Merits in terms of MIMO Gain and Gap to EUD}
To further evaluate the merits of the grant-free RA with massive MIMO, comparison between the grant-free RA with massive MIMO and that with single antenna is presented in Fig. \ref{fig12}, Table \ref{table3}, and Table \ref{table4}.

\begin{figure}[!h]
\centering
\includegraphics[width=3.7in]{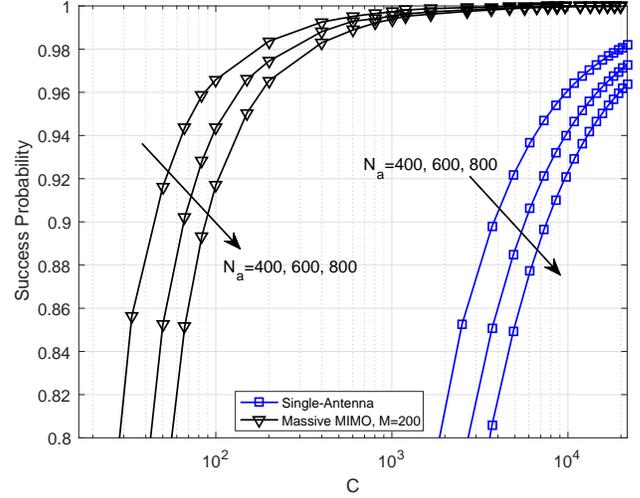}
\caption{Comparison of success probability between the grant-free RA with massive MIMO and that with single antenna, under spatially correlated Rayleigh fading channel. In the case of massive MIMO, we fix $M=200$, $P=128$, and $\rho_\mathrm{R}=0$dB.} \label{fig12}
\end{figure}

 For the single-antenna case, it is equivalent to the slotted ALOHA based random access and collision-free transmissions in the preamble and data domains have to be guaranteed for a successful data recovery. Therefore, the corresponding success probability of an arbitrary RA UE is give by
\begin{align}\label{eq28}
{P}_{\mathrm{SA}}=(1-\frac{1}{C})^{N_a-1},
\end{align}
where the subscript $\mathrm{SA}$ indicates that the single antenna is considered.
\begin{table}[!h]
\renewcommand{\arraystretch}{1}
\caption{MIMO gains with different $M$, under spatially correlated Rayleigh fading channel, $P_\mathrm{ZF}=95\%$, $P=128$, and $\rho_\mathrm{R}=0$dB.}\label{table3} \centering
\begin{tabular}{>{\centering}m{2cm}|>{\centering}m{1cm}|>{\centering}m{1cm}|>{\centering}m{1cm}|>{\centering}m{1cm}|>{\centering}m{1cm}|}
\hline

\multicolumn{1}{|c|}{Number of antennas $M$} & $100$ & $200$ & $400$\tabularnewline
\hline

\multicolumn{1}{|c|}{MIMO gain }  & $35.8$ & $99.6$ & $120.3$ \tabularnewline

\hline
\end{tabular}
\end{table}

As shown in Fig. \ref{fig12}, compared to the single-antenna case, the grant-free RA with massive MIMO is able to achieve a great MIMO gain in terms of $\eta$. Herein, the MIMO gain is given by
\begin{align}\label{eq29}
{\mathrm{GAIN}}_{\mathrm{MIMO}}=\frac{\eta_{M}}{\eta_{S}},
\end{align}
where $\eta_{M}$ is defined as the average number of RA UEs that can be accommodated per RA channel with a targeted success probability in the grant-free RA with massive MIMO. $\eta_{S}$ is defined as the average number of RA UEs that can be accommodated per RA channel with a targeted success probability in the single-antenna case. Obviously, given a $95\%$ success probability and $P=128$, an almost $100$ times MIMO gain is achieved with the simulated $N_a$ by employing $M=200$ antennas in massive MIMO.
In other words, with the targeted success probability of $95\%$ and $P=128$, $200$ antennas is adequate in exchanging for about $100$ times spectrum resources. MIMO gains with different values of $M$ is further shown in Table \ref{table3}.
Considering the fact of spectrum resource scarcity, this rate of resource exchange by massive MIMO could be cost-effective.

To highlight another attractive merit of grant-free RA with massive MIMO, a new term called ``Gap to EUD" (measured in percentage) is coined herein, which is given by
\begin{align}\label{eq30}
{\mathrm{GAP}}_{\mathrm{EUD}}=\frac{\eta_{E}-\eta_{R}}{\eta_{R}},
\end{align}
where $\eta_{E}$ is defined as the average number of RA UEs that can be accommodated per RA channel with a targeted success probability by the grant-free RA with the EUD. $\eta_{R}$ is defined as the average number of RA UEs that can be accommodated per RA channel with a targeted success probability in the case of the random user distribution. The gap to EUD reflects the performance superiority of the EUD over the random user distribution. In the followings, we will show that the gap to EUD vanishes as $M$ increases.

As shown in Fig. \ref{fig11}, given a $95\%$ success probability, when $M=400$ and $P=128$, the gap to EUD of the grant-free RA with ZFB is only about $16\%$, which indicates that the grant-free RA in massive MIMO is able to achieve a close performance to the one of the EUD.
On the contrary, a gap to EUD of about $1917\%$ for the single-antenna case could be observed according to the results in Fig. \ref{fig12}.
\begin{table}[!h]
\renewcommand{\arraystretch}{1}
\caption{Gaps to EUD with different $M$, under spatially correlated Rayleigh fading channel, $P_\mathrm{ZF}=95\%$, $P=128$, and $\rho_\mathrm{R}=0$dB.}\label{table4} \centering
\begin{tabular}{>{\centering}m{1.5cm}|>{\centering}m{0.7cm}|>{\centering}m{0.6cm}|>{\centering}m{0.6cm}|>{\centering}m{0.6cm}|>{\centering}m{0.6cm}|}
\hline

\multicolumn{1}{|c|}{Number of antennas $M$}  & $1$ & $50$ & $100$ & $200$ & $400$\tabularnewline
\hline

\multicolumn{1}{|c|}{Gap to EUD }  & $1917\%$ & $68\%$ & $47\%$ & $24\%$ & $16\%$ \tabularnewline

\hline
\end{tabular}
\end{table}

In Table \ref{table4}, the gaps to EUD with different $M$ are given under the spatially correlated Rayleigh fading channel, where $P_\mathrm{ZF}=95\%$, $P=128$, and $\rho_\mathrm{R}=0$dB. It is evident that increasing $M$ would reduce the gap to EUD and it will be close to $0$ when the number of antennas is massive. As aforementioned, the EUD assumes a genie user distribution management, which is desirable but unattainable in realistic grant-free RA. Fortunately, with massive MIMO, the grant-free RA becomes so effective that the performance of this genie user distribution management is approximately achieved.

\section{Conclusions}
Massive MIMO is a promising technique to greatly increase capacity for future wireless communications.
In this paper, we discussed the success probability of the grant-free RA with massive MIMO and derived the analytic
expressions of success probability for conjugate beamforming and zero-forcing beamforming techniques. Simulation results verified the accuracy of the analyses, and confirmed that the grant-free RA with massive MIMO is capable of supporting $N_a$ simultaneous grant-free access over $C$ RA channels with close-to-one success probability, where $N_a$ is multiple times $C$.
We also showed that, with a specified success probability, utilizing more $M$ and $P$ both provide significant benefits in increasing $\eta$ of the grant-free RA, where $\eta=N_a/C$ reflects the channel reuse efficiency. In other words,
a great MIMO gain in terms of $\eta$ could be achieved for the grant-free RA by massive MIMO compared to its single-antenna counterpart. For instance, given $95\%$ success probability, the grant-free RA with $M=200$ and $P=128$ is able to achieve about $100$ times $\eta$ over the single-antenna counterpart, which is a spectrum saving of about $99\%$.
In addition, the grant-free RA with massive MIMO evidently demonstrates an attractive feature of achieving a close performance to the one with even user distribution (EUD) over channels. This is an important merit of the grant-free RA with massive MIMO, considering the fact that the EUD is desirable but unattainable in the grant-free RA.
Finally, as $M$ in massive MIMO is usually assumed much greater than the average number of served UEs per channel,
we found that $P$ would dominate the success probability within a wide range of $\eta$. Therefore,
preamble designs to increase $P$ and/or reduce preamble collision are very much in need to
ensure the performance gain of the grant-free RA with massive MIMO. 






%
\end{document}